\newtheorem{definition}{Definition}
\newtheorem{lemma}{Lemma}
\newtheorem{proposition}{Proposition}
\newtheorem{theorem}{Theorem}
\newfont{\bbb}{msbm10 scaled 700}
\newfont{\bb}{msbm10 scaled 1100}
\newcommand{\EE}{\mbox{\bb E}}
\newcommand{\sv}{{\bf s}}
\newcommand{\xv}{{\bf x}}
\newcommand{\yv}{{\bf y}}
\newcommand{\zv}{{\bf z}}
\newcommand{\Cc}{{\cal C}}
\newcommand{\Ec}{{\cal E}}
\newcommand{\Fc}{{\cal F}}
\newcommand{\Ic}{{\cal I}}
\newcommand{\Pc}{{\cal P}}
\newcommand{\Sc}{{\cal S}}
\newcommand{\Tc}{{\cal T}}
\newcommand{\Wc}{{\cal W}}
\newcommand{\Xc}{{\cal X}}
\newcommand{\Yc}{{\cal Y}}
\newcommand{\Zc}{{\cal Z}}
\newcommand{\var}{{\hbox{var}}}
\newcommand{\eqdef}{\stackrel{\Delta}{=}}
\begin{document}

\title{Joint State Sensing and Communication: \\
Optimal Tradeoff for a Memoryless Case} 
\author{\IEEEauthorblockN{Mari Kobayashi}
\IEEEauthorblockA{
\thanks{The work of M. Kobayashi was supported by an Alexander von Humboldt Research Fellowship.}
Technical University of Munich \\
Munich, Germany\\
 {\tt mari.kobayashi@tum.de}
}
\and
\IEEEauthorblockN{Giuseppe Caire}
\IEEEauthorblockA{
\thanks{The work of M. Kobayashi was supported by an Alexander von Humboldt Research Fellowship.}
Technical University of Berlin \\
Berlin, Germany\\
 {\tt caire@tu-berlin.de}
}
\and
\IEEEauthorblockN{Gerhard Kramer}
\IEEEauthorblockA{
\thanks{The work of M. Kobayashi was supported by an Alexander von Humboldt Research Fellowship.}
Technical University of Munich \\
Munich, Germany\\
 {\tt gerhard.kramer@tum.de}
}
}

\maketitle
\begin{abstract}
A communication setup is considered where a transmitter wishes to simultaneously sense its channel state and convey a message to a receiver. 
The state is estimated at the transmitter by means of generalized feedback, i.e. a strictly causal channel output that is observed at the transmitter. 
The scenario is motivated by a joint radar and communication system where the radar and data applications share the same frequency band. For the case of a memoryless channel with i.i.d. state sequences, we characterize the capacity-distortion tradeoff, defined as the best achievable rate below which a message can be conveyed reliably while satisfying some distortion constraint on state sensing. An iterative algorithm is proposed to optimize the input probability distribution. Examples demonstrate the benefits of joint sensing and communication as compared to a separation-based approach.  
\end{abstract}

\section{Introduction}
A key enabler of autonomous mobile networks is the ability to continuously sense and react to a dynamically changing environment (hereafter called ``state") while letting nodes exchange information with each other. Many existing systems consider an approach based on separation such that resources are divided into either state sensing or data communications. Unfortunately, such a separation-based approach has limitations; i) it performs poorly in high mobility scenarios and for a large state dimension; ii) the data rate degrades by dedicating more resources to state sensing, as no data symbols are sent during the sensing phase. 

These limitations suggest that state sensing and communication should be designed jointly by sharing the same bandwidth.  A number of recent works have studied a joint approach, especially in the context of radar and communication systems operating in-band (see e.g. \cite{sturm2011waveform,bliss2014cooperative,bica2015opportunistic,huang2015radar} and references therein). These works can be roughly classified into interference avoidance and common waveform design \cite{sturm2011waveform,huang2015radar,bica2015opportunistic}. 
Although the latter class considers a joint design, these works mainly apply communication-oriented waveforms such as OFDM to the radar estimation or vice versa. 
Although these works provide waveform design and analysis/simulation of specific scenarios, they do not provide a fundamental framework to study the optimal tradeoff between radar sensing and communication, irrespective of the assumptions on interference avoidance or joint waveforms. We provide a first answer to the optimal tradeoff between state sensing and communication, although restricting to the simplest memoryless case. 

We study the fundamental limits of joint sensing and communication for a point-to-point channel where the transmitter estimates the channel state via a strictly causal channel output, while the receiver has perfect state knowledge. To characterize the tension between sensing quality and communication rate, the capacity-distortion tradeoff is considered as a performance metric. This metric was introduced and studied in~\cite{choudhuri2013causal} and references therein. We remark that the work~\cite{choudhuri2013causal} and the current work differ in their assumptions and concepts. Namely, in \cite{choudhuri2013causal} the transmitter conveys the state to the receiver. In the current work, the transmitter is ignorant of the state and wishes to estimate it using the generalized feedback. The main contributions of the paper are outlined below. 
\begin{enumerate}
\item We characterize the capacity-distortion tradeoff for the memoryless channel with an i.i.d. state sequence;
\item we formulate the capacity-distortion tradeoff maximization as a convex optimization with respect to the input distribution and propose an iterative algorithm; 
\item we provide examples to demonstrate the benefits of a joint design as compared to the separation-based one.  
\end{enumerate}
This paper is organized as follows. We describe the model for joint state sensing and communication in Section \ref{section:model}. Section \ref{section:tradeoff} characterizes the capacity-distortion tradeoff and Section \ref{section:optimization} provides numerical methods to calculate the capacity-distortion tradeoff. We conclude the paper with numerical examples in Section \ref{section:example}.
\section{System Model}\label{section:model}
Consider the communication setup depicted in Fig.~\ref{fig:Model}. 
The channel input, output, feedback output, and state random variables are $X_i$, $Y_i$, $Z_i$, and $S_i$ that take values in the sets $\Xc$, $\Yc$, $\Zc$, and $\Sc$, respectively. The relation between these random variables is characterized by a memoryless channel with i.i.d. states. The joint probability distribution of our model is given by 
\begin{align}
 & P_{WX^nS^nY^nZ^n}(w, \xv, \sv, \yv, \zv) \nonumber\\
& =
P(w) \prod_{i=1}^n P_{S}(s_i)   \prod_{i=1}^n  P(x_i| w z^{i-1})  P_{YZ|XS} (y_iz_i|x_is_i).  
\end{align}
The notation emphasizes that $P_S(\cdot)$ and $P_{YZ|XS}(\cdot)$ are time-invariant.
A $(2^{nR}, n)$ code for the channel consists of
\begin{enumerate}
\item a message set $\Wc = [1:2^{nR}]$;
\item an encoder that sends a symbol $x_i=\phi_i(w, z^{i-1})$ for each message $w\in \Wc$ and each delayed feedback output $z^{i-1}\in \Zc^{i-1}$; 
\item a decoder that assigns a message estimate $\hat{w}=g(y^n, s^n)\in \Wc$;
\item a state estimator that assigns an estimation sequence $\hat{s}^n \in \hat{\Sc}^n$ to each feedback output sequence
$z^n\in \Zc^n$ and the channel input sequence $x^n\in \Xc^{n}$. The set $\hat{\Sc}$ denotes the reconstruction alphabet. 
\end{enumerate}
The state estimate is measured by the expected distortion 
\[
\EE[d(S^n,\hat{S}^n) ]= \frac{1}{n} \sum_{i=1}^n \EE[d(S_i, \hat{S}_i)]
\]
where $d: \Sc\times \hat{\Sc} \mapsto [0,\infty)$ is a distortion function.
A rate distortion pair $(R, D)$ is said to be achievable if there exist $(2^{nR}, n)$ codes with $\lim_{n\rightarrow \infty}P(\hat{W}\neq W) =0$ and $\limsup_{n\rightarrow \infty} \EE[d( S^n,\hat{S}^n)] \leq D$. The capacity-distortion tradeoff $C(D)$ is defined as the supremum of $R$ such that $(R, D)$ is achievable. 

From the well-known result on a memoryless channel with i.i.d. random states where the state is available only at the decoder \cite[Sec. 7.4]{el2011network}, the capacity for the case of unconstrained distortion is 
\begin{align}
C(D=\infty) = \max_{P_X} I(X;Y,S) = \max_{P_X} I(X;Y|S), 
\end{align}
where the maximum is over the input distribution $P_X$. This capacity is achieved by ignoring the feedback. 

\begin{figure}[t]
	\begin{center}	
		\includegraphics[width=0.5\textwidth]{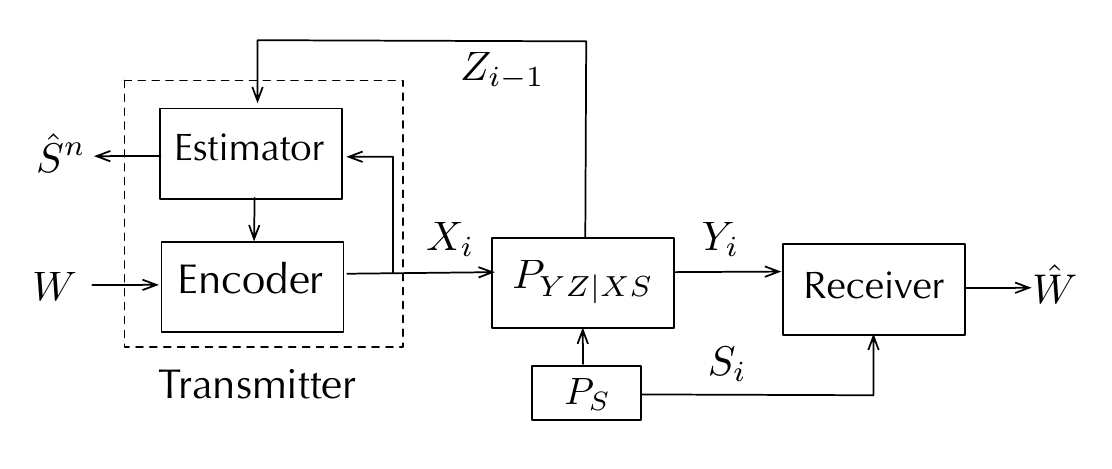}
		\caption{State-dependent channel with generalized feedback}
	\label{fig:Model}
	\vspace{-2em}
	\end{center}
\end{figure}

\section{Capacity-distortion tradeoff}\label{section:tradeoff}
This section characterizes the capacity-distortion tradeoff $C(D)$. We provide some useful lemmas and then the converse and achievability proofs. 
\begin{theorem} \label{th:tradeoff} The capacity-distortion tradeoff of the state-dependent memoryless channel with the i.i.d. states is given by 
\begin{align}
C(D)& = \max I(X;Y|S) 
\end{align}
where the maximum is over all $P_X$ satisfying $ \EE[d( S,\hat{S})] \leq D$ and the joint distribution of $SXYZ\hat{S}$ is given by $P_X(x)P_S(s)P_{YZ|XS} (yz|xs) P_{\hat{S}|XZ}(\hat{s}|xz)$.
\end{theorem}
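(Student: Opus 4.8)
The plan is to establish Theorem~\ref{th:tradeoff} in the usual two parts: a converse showing $R\le \max I(X;Y|S)$ for every achievable $(R,D)$, and an achievability scheme attaining every rate below this bound. For the \textbf{converse} I would begin with Fano's inequality, which for any sequence of codes with vanishing error probability yields $nR\le I(W;Y^n,S^n)+n\epsilon_n$. Since the message $W$ is independent of the i.i.d. state sequence $S^n$, we have $I(W;S^n)=0$ and hence $nR\le I(W;Y^n\mid S^n)+n\epsilon_n$. The heart of the argument is to single-letterize $I(W;Y^n\mid S^n)=\sum_i I(W;Y_i\mid Y^{i-1},S^n)$. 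For each term I would bring in the channel input by additionally conditioning on $Z^{i-1}$: because $X_i=\phi_i(W,Z^{i-1})$ is a deterministic function of the conditioning variables and the channel is memoryless, $H(Y_i\mid W,Y^{i-1},Z^{i-1},S^n)=H(Y_i\mid X_i,S_i)$, while dropping conditioning gives $H(Y_i\mid Y^{i-1},S^n)\le H(Y_i\mid S_i)$. Combining the two yields $I(W;Y_i\mid Y^{i-1},S^n)\le I(X_i;Y_i\mid S_i)$, so $R-\epsilon_n\le \tfrac1n\sum_i I(X_i;Y_i\mid S_i)$. This is precisely the statement that the generalized feedback does not enlarge the rate beyond the no-feedback capacity, consistent with the unconstrained result quoted in Section~\ref{section:model}.

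The step I expect to be the \emph{main obstacle} is matching the distortion constraint to its single-letter form. For each $i$ the code's estimator obeys $\EE[d(S_i,\hat S_i)]\ge \min_{\hat s(\cdot)}\EE[d(S_i,\hat s(X^n,Z^n))]$, and the delicate point is that this minimum is attained by an estimator depending on $(X_i,Z_i)$ alone. I would justify this by the Markov relation $S_i-(X_i,Z_i)-(X^{\setminus i},Z^{\setminus i})$, argued via d-separation on the model's factorization: $S_i$ influences the observed quantities only through its children $Y_i$ and $Z_i$; since $Z_i$ is observed and $Y_i$ is a leaf that is never fed back, every path from $S_i$ to the remaining inputs and feedback symbols is blocked. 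Consequently the optimal single-letter estimator is the conditional-risk minimizer characterized by $P_{\hat{S}|XZ}$, which is exactly the estimator appearing in the joint distribution of the statement.

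To conclude the converse I would introduce a time-sharing variable $Q$ uniform on $[1:n]$ and set $X=X_Q$, $Y=Y_Q$, $S=S_Q$, $Z=Z_Q$. Since $I(X;Y|S)$ is concave in $P_X$ for the fixed channel $P_{YZ|XS}$ and fixed $P_S$, Jensen's inequality bounds $\tfrac1n\sum_i I(X_i;Y_i\mid S_i)$ by $I(X;Y|S)$ evaluated at the time-averaged input $P_X=\tfrac1n\sum_i P_{X_i}$, while the per-symbol distortions average linearly to $\EE[d(S,\hat S)]\le D$ under the same $P_X$. This produces a single admissible input distribution attaining the bound.

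For \textbf{achievability} I would exploit that the feedback is unnecessary for communication once the state is known at the decoder. Fixing any $P_X$ meeting the distortion constraint, I draw a codebook i.i.d.\ $\sim P_X$ and let the encoder transmit $x^n(w)$ while ignoring the feedback; standard joint-typicality decoding with $s^n$ as receiver side information then achieves any rate below $I(X;Y|S)$ with vanishing error probability. The state estimator applies the per-symbol rule $\hat S_i=\hat s^\ast(X_i,Z_i)$ induced by $P_{\hat{S}|XZ}$, and because the channel is memoryless, the law of large numbers gives $\tfrac1n\sum_i d(S_i,\hat S_i)\to \EE[d(S,\hat S)]\le D$ in probability. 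A routine averaging argument over the random codebook, using that a randomly drawn codeword has empirical distribution close to $P_X$, then shows the existence of a single codebook for which the error probability vanishes and $\limsup_n \EE[d(S^n,\hat S^n)]\le D$, establishing achievability of every rate up to $I(X;Y|S)$.
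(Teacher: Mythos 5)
Your proposal is correct and shares the paper's overall architecture (Fano's inequality plus single-letterization to $\sum_i I(X_i;Y_i\mid S_i)$ in the converse; random coding that ignores the feedback, with decoder side information $S^n$ and per-letter Bayes estimation at the encoder, in the achievability), but you complete both halves by genuinely different means. In the converse, the paper does not introduce a time-sharing variable: its step $(a)$ bounds each $I(X_i;Y_i\mid S_i)$ by $C\left(\EE[d(S_i,\hat S_i)]\right)$ and then uses the concavity and monotonicity of $C(\cdot)$ as a function of $D$ (Lemma~\ref{lemma:properties}) to conclude $R\le C(D)$. Your route---Jensen applied to the concavity of $I(X;Y|S)$ in $P_X$, together with linearity of the distortion in $P_X$ under a fixed estimator---is essentially the content of Lemma~\ref{lemma:properties}, which the paper states without proof, so your argument is the more self-contained of the two. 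You are also more explicit about the point the paper leaves implicit in its step $(a)$: that the code's estimator $\hat S_i$, a function of $(X^n,Z^n)$, can be replaced without increasing distortion by one depending on $(X_i,Z_i)$ alone; the paper's Lemma~\ref{lemma:Shat} treats only the single-letter law, and the reduction $P_{S_i\mid X^nZ^n}=P_{S_i\mid X_iZ_i}$ is exactly the Markov chain you identify. One caution there: your d-separation sentence (``since $Z_i$ is observed\dots every path is blocked'') is slightly loose, because conditioning on the collider $Z_i$ \emph{opens} the path $S_i\to Z_i\leftarrow X_i$; that path is blocked only because $X_i$ is also in the conditioning set, so you should say so. In the achievability, you invoke the law of large numbers for the i.i.d.\ triples $(X_i,S_i,Z_i)$, whereas the paper conditions on the decoding-error event and applies the typical average lemma, incurring a $(1+\epsilon)$ factor that it later removes via continuity of $C(D)$; your version avoids that bookkeeping (and incidentally sidesteps the paper's slight mismatch of certifying typicality of $(Y^n,S^n,X^n)$ when the estimator's distortion involves $Z^n$), at the cost of needing boundedness of $d$ to convert convergence in probability into convergence of expectations---an assumption the paper makes explicitly ($d\le d_{\max}$) and which your ``routine averaging argument'' should state as well.
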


To prove Theorem \ref{th:tradeoff}, we first provide useful properties of $C(D)$ and the state estimator. 
\begin{lemma}\label{lemma:properties} 
$C(D)$ is a nondecreasing concave function of $D$ for $D\geq D_{\min} \eqdef \min \EE [d(S, \hat{S})]$ where the minimum is over all $P_X$ 
and $P_{\hat{S}|XS}$.
\end{lemma}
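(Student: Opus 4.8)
The plan is to reduce both claims to two facts: that $I(X;Y|S)$ is concave in $P_X$, and that---once the estimator is chosen optimally---the distortion constraint is \emph{affine} in $P_X$. First I would characterize the effective distortion. For a fixed input law the best estimator chooses, for each observed $(x,z)$, a reconstruction in $\arg\min_{\hat{s}} \sum_s P_{S|XZ}(s|x,z)\, d(s,\hat{s})$. The crucial observation is that the posterior
\[
P_{S|XZ}(s|x,z) = \frac{P_S(s)\, P_{Z|XS}(z|x,s)}{\sum_{s'} P_S(s')\, P_{Z|XS}(z|x,s')}
\]
is independent of $P_X$, since the factor $P_X(x)$ cancels between numerator and denominator. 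Hence the minimum per-input distortion $d^*(x) \defeq \EE[\,d(S,\hat{S})\mid X=x\,]$ under this estimator is a \emph{fixed} function of $x$, and the attainable distortion for input law $P_X$ equals $\sum_x P_X(x)\, d^*(x)$, which is affine in $P_X$. Because $I(X;Y|S)$ does not involve $\hat{S}$, the estimator can be optimized pointwise and before $P_X$, so without loss $C(D) = \max\{\, I(X;Y|S) : P_X \in \Fc(D)\,\}$ with convex feasible set $\Fc(D) \defeq \{P_X : \sum_x P_X(x)\, d^*(x) \le D\}$.

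Monotonicity is then immediate: if $D' \ge D$ then $\Fc(D) \subseteq \Fc(D')$, so maximizing the same objective over the larger set can only increase it, giving $C(D') \ge C(D)$.

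For concavity I would use a direct mixing argument. Let $P_X^{(1)}$ and $P_X^{(2)}$ attain $C(D_1)$ and $C(D_2)$, and for $\lambda\in[0,1]$ set $P_X^{(\lambda)} = \lambda P_X^{(1)} + (1-\lambda) P_X^{(2)}$. Since the distortion functional is affine in $P_X$, the mixed law obeys $\sum_x P_X^{(\lambda)}(x)\, d^*(x) \le \lambda D_1 + (1-\lambda) D_2$, so $P_X^{(\lambda)} \in \Fc(\lambda D_1 + (1-\lambda) D_2)$. The key analytic fact is that $I(X;Y|S)$ is concave in $P_X$ for the fixed channel $P_{YZ|XS}$: writing $I(X;Y|S) = H(Y|S) - H(Y|X,S)$, the term $H(Y|X,S) = \sum_{x,s} P_X(x) P_S(s) H(Y|X=x,S=s)$ is affine in $P_X$, while $H(Y|S) = \sum_s P_S(s) H(Y|S=s)$ is concave, because each output law $P_{Y|S=s}(\cdot) = \sum_x P_X(x) P_{Y|XS}(\cdot|x,s)$ is linear in $P_X$ and entropy is concave. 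Evaluating this concavity at $P_X^{(\lambda)}$ yields
\[
C(\lambda D_1 + (1-\lambda)D_2) \ge I(X;Y|S)\big|_{P_X^{(\lambda)}} \ge \lambda\, C(D_1) + (1-\lambda)\, C(D_2),
\]
which is the claimed concavity.

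The main obstacle is the first step: recognizing that the optimal estimator, and therefore the effective distortion, does not depend on $P_X$. This is what converts an a priori awkward, estimator-coupled constraint into an affine functional of the input law; it hinges on the cancellation of $P_X(x)$ in the posterior $P_{S|XZ}$. Once this reduction is established, both monotonicity and concavity follow from the feasible-set inclusion and the textbook concavity of mutual information, respectively. I would also verify that the pointwise optimization of $P_{\hat{S}|XZ}$ can be performed independently of $P_X$, so that replacing the distortion constraint by $\sum_x P_X(x)\, d^*(x) \le D$ is without loss, and note that $D_{\min} = \min_x d^*(x)$ marks the smallest $D$ for which $\Fc(D)$ is nonempty.
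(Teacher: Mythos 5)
Your proof is correct, but there is nothing in the paper to compare it against: the paper states this lemma \emph{without} proof (only Lemma \ref{lemma:Shat} gets a written proof). What you establish is concavity and monotonicity of the single-letter function $D \mapsto \max\{I(X;Y|S) : \EE[d(S,\hat{S})] \le D\}$, which is exactly the form in which the lemma is invoked in the converse (steps (b) and (c)), so your reading of the statement is the right one. Your key device --- that the posterior $P_{S|XZ}$, and hence the optimal estimator $\hat{s}(x,z)$ and the per-input distortion $d^*(x)$, do not depend on $P_X$, turning the distortion constraint into the linear constraint $\sum_x P_X(x)\, d^*(x) \le D$ --- is precisely the observation the paper defers to Section \ref{section:optimization}, where $d^*(x)$ reappears as the cost function $c(x)$ in \eqref{newcost} and makes problem \eqref{P2} convex; so your argument in effect anticipates the paper's own machinery rather than diverging from it. Three small remarks: (i) the alternative, purely operational proof (time-sharing between codes achieving $(R_1,D_1)$ and $(R_2,D_2)$) would give concavity of the operational tradeoff directly, but that version is not what the converse chain of inequalities needs, so your informational route is preferable here; (ii) the lemma writes $D_{\min}$ as a minimum over $P_{\hat{S}|XS}$, presumably a typo for $P_{\hat{S}|XZ}$ since the estimator observes $(X,Z)$ and not $S$, and your identification $D_{\min} = \min_x d^*(x)$ matches that corrected reading; (iii) you should say explicitly that the alphabets are finite, so that the feasible set $\Fc(D)$ is compact, the maxima defining $C(D_1)$ and $C(D_2)$ are attained, and the mixing argument can start from genuine maximizers.
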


\begin{lemma}\label{lemma:Shat} 
We can choose without loss of generality a deterministic estimator given by 
\begin{align}
   \hat{s} = \hat{s}(x,z) = {\rm arg}\min_{s'\in \Sc} \sum_{s\in \Sc} P_{S|XZ}(s|x,z)  d(s, s')
   \label{eq:estimator}
\end{align}
for all $x,z$.
\end{lemma}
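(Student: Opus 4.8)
The plan is to show that among all (possibly randomized) estimators mapping the observation $(x^n,z^n)$ to a reconstruction $\hat{s}^n$, the symbol-by-symbol deterministic rule \eqref{eq:estimator} minimizes the expected distortion, so that restricting to it is without loss of optimality. Since the distortion is additive, $\EE[d(S^n,\hat{S}^n)]=\frac1n\sum_i\EE[d(S_i,\hat{S}_i)]$, it suffices to minimize each term separately. Writing $\hat{S}_i=\hat{s}_i(X^n,Z^n)$ for the $i$-th output of an arbitrary estimator and conditioning on the observation, I would use iterated expectation to write
\begin{align}
\EE[d(S_i,\hat{S}_i)] = \EE_{X^nZ^n}\Big[\textstyle\sum_{s\in\Sc} P_{S_i|X^nZ^n}(s\cond X^n,Z^n)\, d\big(s,\hat{s}_i(X^n,Z^n)\big)\Big]. \nonumber
\end{align}

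The key step is to establish the Markov relation $S_i-(X_i,Z_i)-(X^{\setminus i},Z^{\setminus i})$, i.e. that $P_{S_i|X^nZ^n}(s\cond x^n,z^n)=P_{S|XZ}(s\cond x_i,z_i)$. To see this I would start from the joint law of $(W,X^n,S^n,Y^n,Z^n)$, sum out $Y^n$ to replace $P_{YZ|XS}$ by its marginal $P_{Z|XS}$, and observe that the only factors carrying $s_i$ are $P_S(s_i)P_{Z|XS}(z_i\cond x_i,s_i)$. The encoder kernels $P(x_j\cond w z^{j-1})$ and the message prior $P(w)$ couple $W$ with $(X^n,Z^n)$ but are functionally independent of $s_i$. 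Marginalizing $W$ and $\{S_j\}_{j\ne i}$ therefore factors the joint distribution into a term depending on $s_i$ only through $(x_i,z_i)$ and a term independent of $s_i$; normalizing then yields exactly $P_{S|XZ}(s\cond x_i,z_i)$.

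Given this reduction, the inner conditional expectation depends on the estimator output only through the scalar cost $\sum_s P_{S|XZ}(s\cond x_i,z_i)\,d(s,\hat{s}_i)$, which is a function of $(x_i,z_i)$ alone. For each realization the best value of $\hat{s}_i\in\hat{\Sc}$ is precisely the minimizer in \eqref{eq:estimator}, which depends only on $(x_i,z_i)$; a randomized choice is a convex average of such costs and can therefore only increase the objective. Hence the symbol-wise deterministic estimator \eqref{eq:estimator} is optimal and attains the per-letter minimum distortion simultaneously for every $i$.

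I expect the main obstacle to be the careful justification of the Markov property: one must verify that conditioning on the \emph{entire} collection of past and future inputs and feedback outputs reveals nothing about $S_i$ beyond $(X_i,Z_i)$, despite the feedback-dependent encoding $x_j=\phi_j(w,z^{j-1})$ that correlates $S_i$ (through $Z_i$) with later inputs $X_{i+1},\dots,X_n$. The memoryless channel and i.i.d. state assumptions make the bookkeeping of which factors depend on $s_i$ clean, but the argument must be written so that the common normalizing factor genuinely cancels and leaves the single-letter conditional.
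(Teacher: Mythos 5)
Your proof is correct, but it establishes a stronger statement than the paper's own proof and by a genuinely different route. The paper works entirely inside the single-letter model of Theorem \ref{th:tradeoff}, where the postulated joint law $P_X P_S P_{YZ|XS} P_{\hat{S}|XZ}$ already builds in the Markov chain $\hat{S}-XZ-S$: it conditions on $(X,Z)$, averages over the randomized estimator $P_{\hat{S}|XZ}$, lower-bounds the inner sum by $\min_{\hat{s}}\sum_s P_{S|XZ}(s\cond xz)\,d(s,\hat{s})$, and notes that the deterministic rule \eqref{eq:estimator} attains this bound --- essentially only your final step. What you do differently is start from an arbitrary block estimator $\hat{s}_i(X^n,Z^n)$ as allowed by the code definition and derive the key Markov relation $S_i-(X_i,Z_i)-(X^n,Z^n)$ from the code's joint distribution; your factorization argument is sound, since after summing out $Y^n$, $W$ and $\{S_j\}_{j\neq i}$ the only factor involving $s_i$ is $P_S(s_i)P_{Z|XS}(z_i\cond x_i s_i)$, the encoder kernels $P(x_j\cond w z^{j-1})$ being functionally independent of $s_i$, so the common factor cancels upon normalization. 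This extra work is not wasted: the paper's converse step $(a)$, which bounds $I(X_i;Y_i\cond S_i)$ by $C\left(\EE[d(S_i,\hat{S}_i)]\right)$ with $\hat{S}_i$ produced by the $n$-letter code estimator, implicitly requires exactly this reduction from block estimators to symbol-by-symbol estimators of the form \eqref{eq:estimator}, and the paper never proves it. In short, your argument subsumes the paper's two-line proof and additionally closes a gap the paper leaves implicit, at the cost of the extra factorization bookkeeping.
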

\begin{proof}
\begin{align}\label{eq:step1}
& \EE[d(S, \hat{S})] = \EE\left[ \EE[d(S, \hat{S}) |X,Z] \right] \nonumber \\
&\overset{\mathrm{(a)}}= \sum_{x,z} P_{XZ}(xz) \sum_{\hat{s}\in \Sc}  P_{\hat{S}|XZ} (\hat{s}|xz)\sum_{s} P_{S|XZ}(s|xz)  d(s, \hat{s})\nonumber \\
&\geq \sum_{x,z} P_{XZ}(xz) \min_{\hat{s}\in \Sc}\sum_{s} P_{S|XZ} (s|xz)  d(s, \hat{s}) \nonumber\\
&\overset{\mathrm{(b)}}= \sum_{x,z} P_{XZ}(xz) \sum_{s} P_{S|XZ} (s|xz) d(s, \hat{s}(x, z))\nonumber\\
&=\EE[d(S, \hat{s}(X,Z))] 
\end{align}
where $(a)$ follows from the Markov chain $\hat{S}-XZ-S$, and $(b)$ follows by 
choosing \eqref{eq:estimator}. 
\end{proof}

\subsection{Converse} 
 From Fano's inequality we have
\begin{align}
nR & \leq I(W; Y^n, S^n) + n\epsilon_n \nonumber\\
&=  I(W; Y^n| S^n)+ n\epsilon_n\nonumber\\
&= \sum_{i=1}^n H(Y_i| Y^{i-1} S^n) - H(Y_i | W, Y^{i-1} S^n)+ n\epsilon_n \nonumber \\
&\overset{\mathrm{(a)}}\leq  \sum_{i=1}^n H(Y_i|  S_i) - H(Y_i | X_i, Y^{i-1}, W, S^n) + n\epsilon_n \nonumber\\
&\overset{\mathrm{(b)}} =  \sum_{i=1}^n H(Y_i| S_i) - H(Y_i | X_i,  S_i)+ n\epsilon_n\nonumber \\
&= \sum_{i=1}^n I(X_i;Y_i| S_i)+ n\epsilon_n  
\end{align}
where $(a)$ follows by removing the conditioning on $\{S_l\}_{l\neq i}, Y^{i-1}$ in the first term and adding the conditioning on $X_i$ in the second term; $(b)$ follows because $(W,  Y^{i-1}, \{S_l\}_{l\neq i}) - (S_i, X_i) - Y_i $ forms a Markov chain.
We also have
\begin{align}
R & \leq \frac{1}{n} \sum_{i=1}^n I(X_i;Y_i| S_i)+ \epsilon_n \nonumber \\
&\overset{\mathrm{(a)}} \leq \frac{1}{n} \sum_{i=1}^n C\left(\EE[d(S_i, \hat{S}_i)]\right) + \epsilon_n \nonumber\\
&\overset{\mathrm{(b)}} \leq C\left(\frac{1}{n} \sum_{i=1}^n \EE[d(S_i, \hat{S}_i)]\right) + \epsilon_n \nonumber\\
&\overset{\mathrm{(c)}} \leq C(D)
\end{align}
where $(a)$ follows from the definition of $C(D)$; $(b)$ follows from the concavity property of Lemma \ref{lemma:properties}; $(c)$ follows from the nondecreasing property of Lemma \ref{lemma:properties}.
\subsection{Achievability}
We prove Theorem \ref{th:tradeoff} when the distortion function $d(\cdot)$ is bounded by $d_{\rm max}= \max_{(s, \hat{s})\in \Sc\times \hat{\Sc}} d(s, \hat{s})<\infty$. The proof can be extended, as usual, to $d(\cdot)$ for which there is a letter $s^*$ such that $\EE[d(S, s^*)] \leq d_{\max}$.
 \paragraph{Codebook generation} 
 Fix $P_X(\cdot)$ and functions $\hat{s}(x,z)$ that achieve $C(D/(1+\epsilon))$, where $D$ is the desired distortion.
 Randomly and independently generate $2^{nR}$ sequences $x^n(w)$ for each $w\in [1:2^{nR}]$.  This defines the codebook $\Cc$ which is revealed to the encoder and the decoder. 
 \paragraph{Encoding} To send a message $w\in \Wc$, the encoder chooses and transmits $x^n(w)$. 
 \paragraph{Decoding} The decoder finds a unique index $\hat{w}$ such that $(y^n, s^n, x^n(\hat{w}))$ is jointly typical, i.e. 
\begin{align}
 (y^n, s^n, x^n(\hat{w})) \in \Tc_{\epsilon}^{(n)} 
 \end{align}
 \paragraph{Estimation} The encoder computes the reconstruction sequence as $
 \hat{s}^n = \hat{s}(x^n(w),z^n)$. 
 \paragraph{Analysis of Expected Distortion}
In order to bound the distortion averaged over a random choice of the codebooks $\Cc$, we define the decoding error event. The decoder makes an error if and only if one or both of the following events occur. 
  \begin{align}
\Ec_1 &= \{  (y^n, s^n, x^n(1)) \notin \Tc_{\epsilon}^{(n)}  \} \\
\Ec_2 &= \{  (y^n, s^n, x^n(\hat{w})) \in \Tc_{\epsilon}^{(n)} \;\; \text{for some $w\neq 1$} \}
\end{align}
 where we assume without loss of generality that $w=1$ is sent. By the union bound, we have
 \begin{align*}
 P_e = P(\Ec_1 \cup \Ec_2 )  \leq P(\Ec_1) +  P( \Ec_2).
 \end{align*}
 The first term goes to zero as $n\rightarrow \infty$ by the law of large numbers. The second term also tends to zero as $n\rightarrow \infty$ if $R < I(X;Y|S)$ by the independence of the codebooks and the packing lemma \cite[Lemma 3.1]{el2011network}. Therefore, $ P_e$ tends to zero as $n\rightarrow \infty$ if $R < I(X;Y|S) $. 
 
 If there is no decoding error, we have 
\begin{align}
(Y^n, S^n, X^n(1)) \in \Tc_{\epsilon}^{(n)}.
\end{align}
The expected distortion averaged over the random codebook, encoding and decoding, is upper bounded as
  \begin{align}
  \lefteqn{\limsup_{n\rightarrow \infty} \EE[d(S^n, \hat{S}^n)]}\nonumber \\
 &\overset{\mathrm{(a)}}  \leq  \limsup_{n\rightarrow \infty} 
 \left( P_e d_{\max} +(1+\epsilon) (1-P_e) \EE[d(S, \hat{S})] \right) \nonumber\\
 &\overset{\mathrm{(b)}} \leq   \limsup_{n\rightarrow \infty}   \left( P_e d_{\max} + (1+\epsilon) (1-P_e) D\right) \nonumber\\
 &\overset{\mathrm{(c)}} = (1+\epsilon)D
 \end{align}
where $(a)$ follows by applying the upper bound on the distortion function to the decoding error event and the typical average lemma \cite[Ch. 2.4]{el2011network} to the successful decoding event; $(b)$ follows from the assumption on $P_X$ and $\hat{s}(x, z)$ that satisfy $\EE[d(S, \hat{S})]\leq D$; $(c)$ follows because $\limsup_{n\rightarrow \infty} P_e\rightarrow 0$ if $R < I(X;Y|S)= C(\frac{D}{1+\epsilon})$, which proves the achievability of the pair $(C(\frac{D}{1+\epsilon}), D)$. Finally, by the continuity of $C(D)$ in $D$, $C(D)$ is achieved as $\epsilon\rightarrow 0$. 

\section{Numerical Method for Optimization}\label{section:optimization}
Suppose the channel input is subject to a cost constraint $B$ in addition to the distortion constraint $D$. That is, we consider a cost function $b(X^n) = \frac{1}{n}\sum_{i=1}^n b(X_i)$ such that $\limsup_{n\rightarrow \infty} \EE[b(X^n)] \leq B$. Following similar steps above, one can show that the capacity-distortion-cost tradeoff is 
\[
C(D, B)= \max I(X;Y|S)
\]
where the maximum is over $P_X$ satisfying both the distortion and cost constraints. 
We formulate the above maximization as a convex optimization with respect to the input distribution and propose a modified Blahut-Arimoto algorithm. 
\subsection{Problem Formulation} 
The optimization problem can be stated as
\begin{subequations} \label{P1} 
\begin{eqnarray}  
\mbox{maximize} & & I(X;Y|S) \label{obj} \\
\mbox{subject to} & & \EE[b(X)] \leq B \label{cost} \\
& & \EE[d(S,\hat{S})] \leq D. \label{distortion}
\end{eqnarray}
\end{subequations}
For the joint distribution 
$P_X P_S P_{YZ|XS} P_{\hat{S}|XZ}$, 
the estimator $\hat{s}(x, z)$ given in Lemma \ref{lemma:Shat} can be computed a priori. 
At this point, the original problem \eqref{P1} can be rewritten explicitly 
in terms of $P_X$ 
\begin{subequations} \label{P2} 
\begin{eqnarray}  
\mbox{maximize} & &   \Ic(P_X, P_{Y|XS} | P_S)\label{obj2} \\
\mbox{subject to} & & \sum_x b(x) P_X(x)  \leq B \label{cost2} \\
& & \sum_x c(x) P_X(x) \leq D \label{distortion2} \\
& & \sum_x P_X(x) = 1  \label{pmf} 
\end{eqnarray}
\end{subequations}
where we define the mutual information functional
\begin{small}
\begin{align} \label{I}
& \Ic(P_X, P_{Y|XS} | P_S) \nonumber\\
 &= \sum_s P_S(s) \sum_x \sum_y P_X(x) P_{Y|XS}(y|xs) \log \frac{P_{Y|XS}(y|xs)}{P_{Y|S}(y|s)}. 
\end{align}
\end{small}

We define an additional cost function
\begin{equation} \label{newcost}
c(x) = \sum_{z\in \Zc} P_{Z|X}(z|x) \sum_{s \in \Sc} P_{S|XZ} (s|xz) d(s, \hat{s}(x, z)).
\end{equation}
Two remarks are in order. First, the problem \eqref{P2} has two cost functions that must simultaneously satisfy their corresponding average constraints. By letting $\Fc(B, D)$ denote a feasible set of $P_X$ satisfying constraints \eqref{cost2} and \eqref{distortion2}, $\Fc(B, D)$ might be empty for some values of $(B, D)$. For simplicity, we assume that $\Fc(B, D)$ is not empty and therefore $\Fc(B, D)$ is
a convex compact set (the constraints are both linear). Second, the solution of \eqref{P2} does not generally satisfy both constraints  \eqref{cost2} and \eqref{distortion2} with equality. Thus, it is reasonable to consider a parametric form of the optimization problem by incorporating one cost function as a penalty term in the objective function and focusing on the other cost function. The new problem is to maximize 
\begin{eqnarray}  
 \Ic(P_X, P_{Y|XS} | P_S) - \mu  \sum_x c(x) P_X(x)  \label{obj3} 
 \end{eqnarray}
subject to the constraints \eqref{cost2} and \eqref{pmf}, where $\mu \geq 0$ is a fixed parameter. 

 We proceed as in the derivation of the standard Blahut-Arimoto algorithm \cite{blahut1972computation} that computes the capacity-cost function. 
In particular, let $Q_{X|YS}$ denote a general conditional pmf of $X$ given $Y,S$ and consider the function
\begin{align*} 
\lefteqn{J(P_X,P_{Y|XS}, Q_{X|YS} | P_S) }\nonumber\\
& =  \sum_s P_S(s) \sum_x \sum_y P_X(x) P_{Y|XS}(y|xs) \log \frac{Q_{X|YS}(x|ys)}{P_X(x)}. 
\end{align*}
We then have the following result. 
\begin{theorem} \label{modifiedBA}
Let $\Pc(B)$ denote the set of pmfs $P_X$ satisfying (\ref{cost2}). The following statements hold: \\
a) Letting $L(B,D)$ denote the optimal value of \eqref{obj3}, we have
\begin{align} 
L(B,D) = \max_{P_X \in \Pc(B)}  \min_{Q_{X|YS}} J(P_X,P_{Y|XS}, Q_{X|YS} | P_S)  \nonumber \\
 - \mu  \sum_x c(x) P_X(x).  \label{BAa}
\end{align} 
b) For fixed $P_X \in  \Pc(B)$,  the function $J - \mu  \sum_x c(x) P_X(x)$ is maximized over $Q_{X|YS}$ by 
\begin{equation} 
Q^\star_{X|YS}(x|ys) = \frac{P_X(x) P_{Y|XS}(y|xs)}{\sum_{x'} P_X(x') P_{Y|XS}(y|x's)}.  \label{BAb}
\end{equation}
c) For fixed $ Q_{X|YS}$, the function $J- \mu  \sum_x c(x) P_X(x)$
is maximized over $P_X \in \Pc(B)$ by 
\begin{equation} \label{BAc}
P_X^\star(x) = \frac{e^{g(x)}}
{\sum_{x'} e^{g(x')}}
\end{equation}
where 
\begin{small}
\begin{equation}
g(x) =  \sum_s \sum_y P_S(s) P_{Y|XS}(y|xs) \log Q_{X|YS}(x|ys) - \lambda b(x) - \mu c(x) 
\end{equation}
\end{small}
and $\lambda \geq 0$ is chosen so that \eqref{cost2} holds with equality. \hfill $\square$
\end{theorem}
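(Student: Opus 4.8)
The plan is to recognize Theorem~\ref{modifiedBA} as the state-dependent instance of the standard Blahut--Arimoto lifting, in which the conditional mutual information functional \eqref{I} is rewritten as a joint maximization over both the input pmf $P_X$ and an auxiliary ``reverse'' conditional pmf $Q_{X|YS}$. The single identity driving all three parts is the variational lower bound $\Ic(P_X,P_{Y|XS}\,|\,P_S)=\max_{Q_{X|YS}} J(P_X,P_{Y|XS},Q_{X|YS}\,|\,P_S)$, valid for every fixed $P_X$. Granting this, part~(b) is exactly the identification of the maximizer, part~(a) is the resulting iterated maximization (the penalty $-\mu\sum_x c(x)P_X(x)$ is independent of $Q_{X|YS}$ and so passes through the inner optimization untouched), and part~(c) is the inner maximization over $P_X$ for a frozen $Q_{X|YS}$.

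First I would prove part~(b) and the variational identity together. Using $X\perp S$ one has $P_{Y|XS}(y|xs)/P_{Y|S}(y|s)=P_{X|YS}(x|ys)/P_X(x)$, so that \eqref{I} can be rewritten with $\log\bigl(P_{X|YS}(x|ys)/P_X(x)\bigr)$ in place of the log-ratio of $Y$-pmfs. Subtracting $J$ and regrouping the weights as $P_S(s)P_X(x)P_{Y|XS}(y|xs)=P_{YS}(y,s)P_{X|YS}(x|ys)$ gives
\begin{align*}
\Ic - J = \sum_{y,s} P_{YS}(y,s)\, D\!\left(P_{X|YS}(\cdot|ys)\,\big\|\,Q_{X|YS}(\cdot|ys)\right)\ \geq\ 0,
\end{align*}
with equality if and only if $Q_{X|YS}=P_{X|YS}$ wherever $P_{YS}>0$. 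This proves both the identity and that the maximizer is the true posterior \eqref{BAb}; part~(a) is then immediate by composing the outer maximization over $P_X\in\Pc(B)$ with this inner one.

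For part~(c) I would fix $Q_{X|YS}$ and treat $F(P_X)\defeq J-\mu\sum_x c(x)P_X(x)$ as a function of $P_X$ alone. Writing $a(x)\defeq\sum_{s,y}P_S(s)P_{Y|XS}(y|xs)\log Q_{X|YS}(x|ys)$ and using the normalization $\sum_{s,y}P_S(s)P_{Y|XS}(y|xs)=1$, the objective collapses to $F(P_X)=\sum_x P_X(x)\bigl(a(x)-\mu c(x)\bigr)+H(P_X)$, i.e.\ a linear term plus the entropy. This is strictly concave on the simplex, and the feasible set $\Pc(B)$ is the convex intersection of the simplex with the half-space \eqref{cost2}, so the KKT conditions are necessary and sufficient. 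Forming the Lagrangian with multiplier $\nu$ for \eqref{pmf} and $\lambda\geq 0$ for \eqref{cost2}, stationarity in $P_X(x)$ yields $\log P_X(x)=a(x)-\mu c(x)-\lambda b(x)-1-\nu$; eliminating $\nu$ through normalization produces the Gibbs form \eqref{BAc} with $g(x)=a(x)-\lambda b(x)-\mu c(x)$, and complementary slackness fixes $\lambda\geq 0$ so that \eqref{cost2} holds with equality whenever it is active (and $\lambda=0$ otherwise).

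The routine parts are the algebra in (b) and the Lagrangian computation in (c); the only point requiring genuine care is the treatment of the cost constraint in part~(c). One must verify the constraint qualification (the feasible set is nonempty and has relative interior by the standing assumption that $\Fc(B,D)$ is a nonempty convex compact set), argue that the concavity of $F$ upgrades the stationary point to the global maximizer, and handle the selection of $\lambda$ via complementary slackness rather than by blindly enforcing equality in \eqref{cost2}. I would also flag that the inner optimum over $Q_{X|YS}$ in part~(a) is a \emph{maximum}, consistent with part~(b), which is the hinge that turns the single maximization \eqref{obj3} into the alternating maximization exploited by the algorithm.
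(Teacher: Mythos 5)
Your proof is correct and follows exactly the route the paper intends: the paper omits the proof of Theorem~\ref{modifiedBA} entirely, citing the ``standard alternating optimization technique'' of \cite[Chapter 13.8]{cover_book}, and your argument --- the KL-divergence gap $\Ic - J = \sum_{y,s} P_{YS}(y,s)\, D\bigl(P_{X|YS}(\cdot|ys)\,\big\|\,Q_{X|YS}(\cdot|ys)\bigr) \geq 0$ for parts (a)--(b), using $P_{X|S}=P_X$, and the entropy-plus-linear Lagrangian/KKT computation for part (c) --- is precisely that standard Blahut--Arimoto argument written out in full. Beyond the verdict, two of your side observations deserve emphasis because they correct the statement rather than merely prove it: the inner $\min_{Q_{X|YS}}$ printed in \eqref{BAa} must indeed be a $\max$ (as printed, the right-hand side would be $-\infty$ by sending $Q_{X|YS}(x|ys)\to 0$ on a letter of positive probability, and only the max version is consistent with part (b) and with $\Ic = \max_{Q} J$), and your complementary-slackness treatment of $\lambda$ (equality in \eqref{cost2} only when the constraint is active, $\lambda = 0$ otherwise) is the correct reading of the theorem's looser phrase that $\lambda$ is ``chosen so that \eqref{cost2} holds with equality.''
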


The proof follows by a standard alternating optimization technique (see \cite[Chapter 13.8]{cover_book}) and is omitted due to space limitations. 

\subsection{Modified Blahut-Arimoto Algorithm} 
Based on a general result on alternating optimization \cite[Chapter 13.8]{cover_book}, we propose the following algorithm: 
\begin{itemize}
\item Initialization: \\
Fix $\mu \geq 0$, and let $P^{(0)}_X(x) = \frac{1}{|\Xc|}, \forall x\in \Xc$. 
\item For $k = 1,2,3, \ldots$ do:
\begin{enumerate}
\item Let
\begin{small}
\begin{equation} 
Q^{(k)}_{X|YS}(x|ys) = \frac{P^{(k-1)}_X(x) P_{Y|XS}(y|xs)}{\sum_{x'} P^{(k-1)}_X(x') P_{Y|XS}(y|x's)}.  \label{BAstep1}
\end{equation}
\end{small}
\item 
Choose $\lambda^{(0)} > 0$ and, for $\ell = 1,2,\ldots$, repeat:\\
 a) compute primal variables:  
 $p^{(\ell)}(x) = \frac{e^{g^{(\ell)}(x)}}{\sum_{x'} e^{g^{(\ell)}(x')}}$ with
\begin{small}
 \begin{align} \label{BAstep21}
g^{(\ell)}(x) =  \sum_{s,y}  P_S(s) P_{Y|XS}(y|xs) \log Q^{(k)}_{X|YS}(x|ys) \nonumber \\
- \lambda^{(\ell-1)} b(x) - \mu  c(x)
\end{align}
\end{small}
b) update dual variables:
\begin{small}
\begin{eqnarray} 
\lambda^{(\ell)} & = & \left [ \lambda^{(\ell-1)} + \alpha_\ell  \left ( \sum_x b(x) p^{(\ell)} (x) - B \right ) \right ]_+ 
\end{eqnarray}
\end{small}
where $\alpha_\ell $ is the gradient adaptation step. Let $P^{(k)}_X(x) =\lim_{\ell \rightarrow \infty}  p^{(\ell)}(x)$. 
\end{enumerate} 
\end{itemize}
The above algorithm yields for any fixed input cost $B$ a pair of capacity-distortion values $(C_\mu(B) , D_{\mu}(B))$.
By 
letting $P^{\infty}_X$ denote the convergent input distribution produced by the algorithm, we have
\begin{subequations}
\begin{eqnarray}
C_\mu(B) & = & \Ic(P^{(\infty)}_X, P_{Y|XS} | P_S) \label{rate-mu} \\
D_\mu(B) & = & \sum_x c(x) P^{(\infty)}_X(x). \label{distortion-mu}
\end{eqnarray}
\end{subequations}
By varying $\mu$, we obtain a capacity-distortion tradeoff for fixed input cost $B$. 
Note that for $\mu = 0$ we obtain the standard capacity-cost function of the channel (disregarding the distortion), while 
as $\mu \rightarrow \infty$ the problem becomes a distortion minimization for a given input cost $B$.

\section{Examples}\label{section:example}
We provide two examples to illustrate the gain of our proposed joint scheme with respect to the separation-based approach.
\begin{definition}
A separation-based approach refers to a scheme whose resources are divided into either state sensing with generalized feedback 
or data communication without feedback. 
\end{definition}
For simplicity, the generalized feedback is assumed to be perfect $Z=Y$. 
\subsection{Binary Channel with Multiplicative Bernoulli State}
Consider a binary channel $ Y = S X$ 
where the state $S$ is Bernoulli distributed such that $P_S(1)\eqdef q \in [0,1/2]$, and the multiplication is binary such that $y =0$ if either $s$ or $x$ is 0 while
$y=1$ if $x=s=1$. We use the Hamming distortion measure $d(s, \hat{s}) = s \oplus \hat{s}$. We characterize the input distribution $P_X(0)\eqdef p \in [0,1/2] $ that maximizes $C(D)$. 
The two extreme points on the capacity-distortion tradeoff are as follows.
\begin{enumerate}
\item If $p=0$ (by sending always $X=1$), the minimum distortion $D_{\min}=0$ is achieved, but we have $C(D)=0$.
\item If $p=1/2$, then we have $C(D_{\max})=q$ for $D_{\max} = q/2$.
\end{enumerate}
More generally, we have the following result.
\begin{proposition} The capacity-distortion tradeoff of the binary channel with the multiplicative Bernoulli state is given by
\begin{align}
C(p) = q H_2(p), \;\; D(p) = q p
\end{align}
where $H_2(p)$ denotes the binary entropy function. 
\end{proposition}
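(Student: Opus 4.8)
The plan is to apply Theorem~\ref{th:tradeoff}, which reduces the problem to maximizing $I(X;Y|S)$ over the single input parameter $p=P_X(0)$ subject to the distortion constraint, and then to exhibit closed forms in $p$ for both the objective and the distortion. First I would evaluate the conditional mutual information by splitting on the state. Since $Y=SX$, conditioning on $S=0$ forces $Y=0$ deterministically, so $I(X;Y|S=0)=0$; conditioning on $S=1$ gives $Y=X$, so $I(X;Y|S=1)=H(X)=H_2(p)$. Averaging over $S$ with $P_S(1)=q$ yields
\begin{align}
I(X;Y|S)=(1-q)\cdot 0 + q\,H_2(p)=q\,H_2(p),
\end{align}
which already gives the claimed rate $C(p)=qH_2(p)$.

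Next I would compute the induced distortion using the optimal estimator of Lemma~\ref{lemma:Shat}. With perfect feedback $Z=Y=SX$ there are two cases. When $X=1$ the feedback is $Z=S$, so the state is revealed exactly and the estimator \eqref{eq:estimator} incurs zero distortion. When $X=0$ the feedback is $Z=0$ irrespective of $S$, so $P_{S|XZ}(\cdot\,|\,0,0)=P_S$; because the Hamming distortion turns \eqref{eq:estimator} into a MAP rule and $q\le 1/2$, the estimate is $\hat{s}=0$, incurring distortion $P(S=1)=q$. Weighting by the input probabilities gives
\begin{align}
\EE[d(S,\hat{S})]=P_X(0)\,q+P_X(1)\cdot 0=p\,q,
\end{align}
which is the claimed $D(p)=qp$.

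Assembling these two computations, the tradeoff reduces to maximizing $qH_2(p)$ subject to $pq\le D$. Since $H_2$ is increasing on $[0,1/2]$, the optimum is attained at $p=\min(D/q,1/2)\le 1/2$, so the Pareto-optimal inputs are exactly the Bernoulli laws with $p\in[0,1/2]$, and the pair $(D(p),C(p))=(qp,\,qH_2(p))$ traces the tradeoff curve, recovering the two stated extreme points at $p=0$ and $p=1/2$; the concavity and monotonicity of $C(D)$ from Lemma~\ref{lemma:properties} are consistent with this. I expect the main obstacle to be the bookkeeping of the estimator in the two feedback cases, in particular verifying that $Z=0$ renders the feedback uninformative about $S$ and that the MAP estimate under $q\le 1/2$ is $\hat{s}=0$; once that is pinned down, the remainder is direct substitution.
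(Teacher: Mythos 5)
Your proposal is correct and follows essentially the same route as the paper: both apply Theorem~\ref{th:tradeoff}, evaluate $I(X;Y|S)=qH_2(p)$ (the paper via $H(Y|S)$ for the deterministic channel, you via conditioning on $S=0$ and $S=1$, which is the same computation), and both derive $D(p)=qp$ from the Lemma~\ref{lemma:Shat} estimator by the identical case analysis ($X=1$ reveals $S$ exactly, $X=0$ gives uninformative feedback and MAP estimate $\hat{s}=0$ since $q\le 1/2$), which is precisely the paper's cost function $c(1)=0$, $c(0)=q$.
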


\begin{proof}
Because $Y$ is deterministic given $S, X$, the capacity can be expressed as a function of $p,q$ as follows:
\begin{align*}
 C(p) &=  H(Y|S)\\
 &= -\sum_s P_S(s) \sum_y P_{Y|S} (y|s) \log P_{Y|S} (y|s)= q H_2(p)
\end{align*}
where the last equality follows by $P_{Y|S}(0|0)=1$ and $P_{Y|S}(1|1)=p$. 
To calculate the distortion, we first determine the estimator $\hat{s} (x, y)$ and the resulting cost function $c(x)$.  From Lemma \ref{lemma:Shat}, we have
\begin{align}
\hat{s}(x, 0)= 0, \;\;\forall x, \;\; \hat{s}(x, 1)=1,\;\;\forall x.
\end{align} 
In fact, since $y=1$ cannot be generated from the input $x=0$, the value of $\hat{s}(0, 1)$ is irrelevant. Using \eqref{newcost} and and the conditional pmf, we have
\begin{align}
P_{S|XY} = \begin{cases}
1-q &  \text{if $(x,y,s)=(0,0,0)$}\\
q & \text{if $(x,y, s) = (0,0,1)$} \\
1 & \text{if $(x,y,s) = (1,0,0)$ or $(1,1,1)$}\\
0 & \text{else}
\end{cases}
\end{align}
and we obtain the cost function $c(1)=0$ and $c(0) =q$. This yields the desired distortion.  
\end{proof}
Fig.~\ref{fig:Example} plots $C(D)$ for the case $q=0.4$. Observe that the joint approach yields a significant gain over the separation-based approach that achieves a time-sharing between $(D, C)=(0, 0)$ and $(q,q)$. Note that the distortion $q$ is achieved by considering a fixed estimator $\hat{s}=0$ independent of feedback. 
\begin{figure}[t!]
	\begin{center}	
		\includegraphics[width=0.52\textwidth]{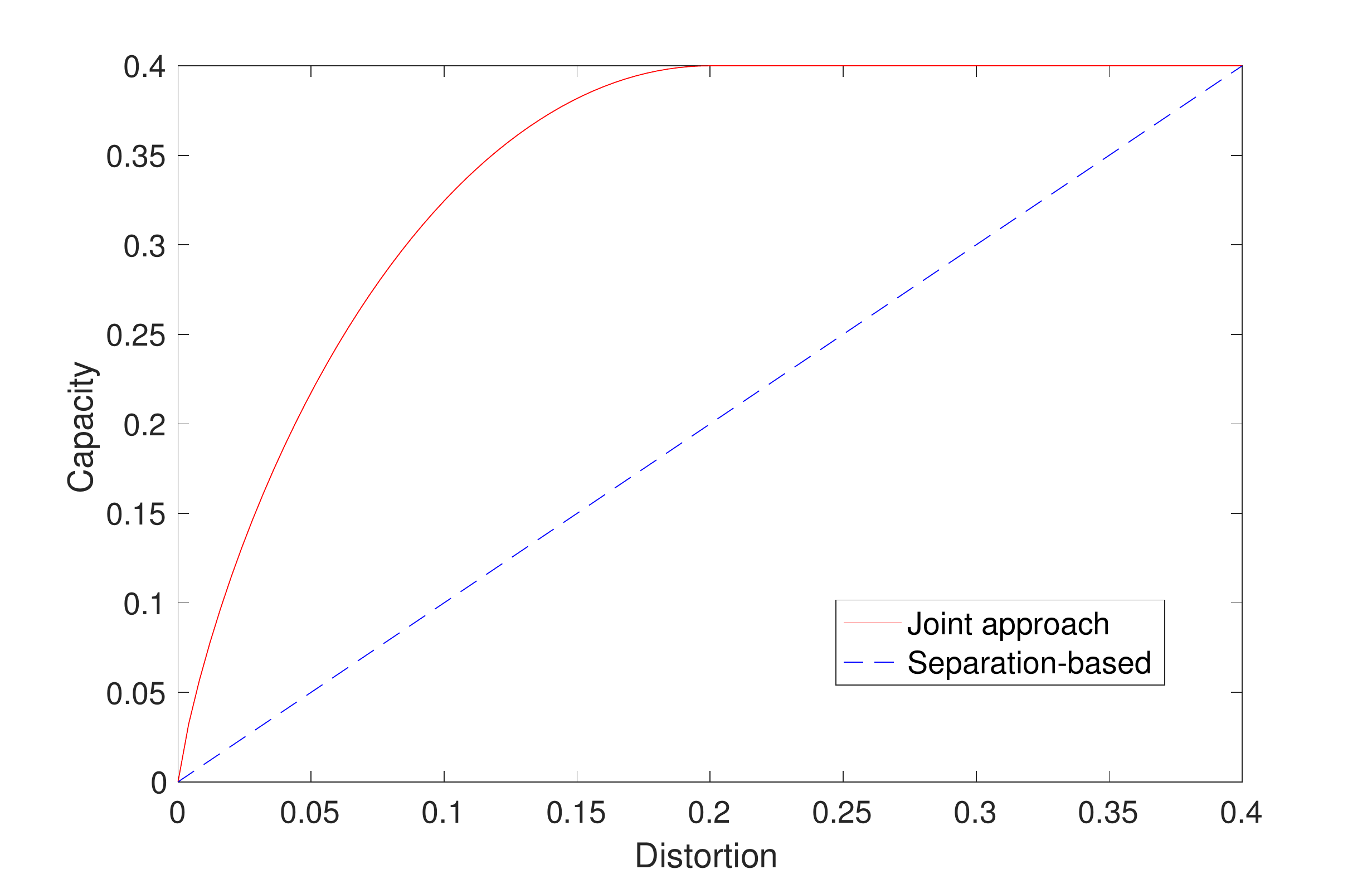}
		\caption{Capacity-distortion tradeoff of binary channel with $q=0.4$.}
	\label{fig:Example}
		\includegraphics[width=0.52\textwidth]{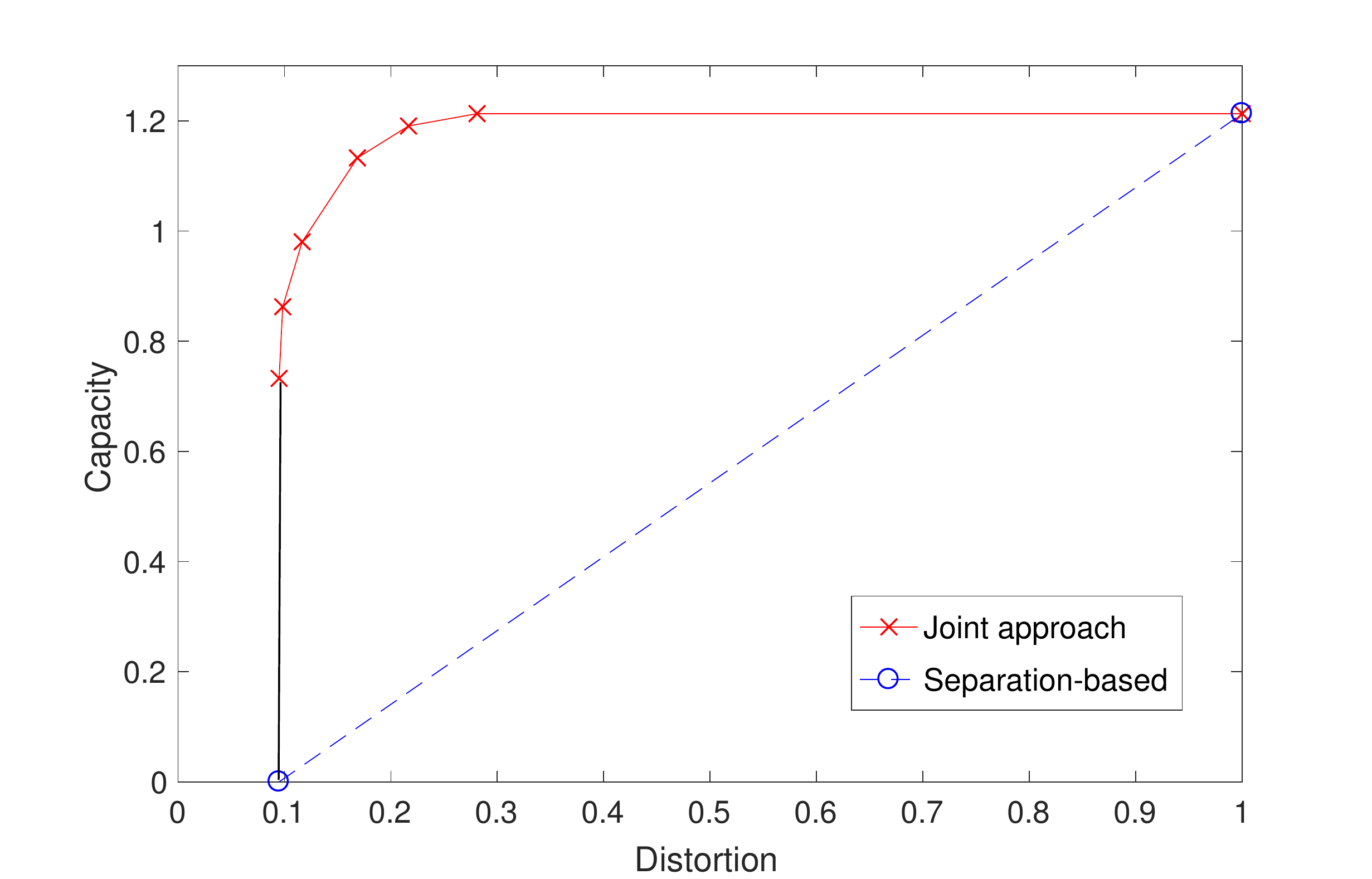} 
		\caption{Capacity-distortion tradeoff of fading AWGN channel $P=10$ dB.}
	\label{fig:Gaussian}
	\end{center}
\end{figure}
\subsection{Real Gaussian Channel with Rayleigh Fading}
Next we consider the real Gaussian channel with Rayleigh fading. The output is given by  
\begin{align}
Y_i = S_i X_i + N_i
\end{align}
where $X_i$ is the channel input satisfying $\frac{1}{n}\sum_i\EE[|X_i|^2] \leq P$, and both $N_i$ and $S_i$ are i.i.d. Gaussian distributed with zero mean and unit variance. 
We let $P=10$ dB. Focusing on the quadratic distortion measure, we consider two extreme cases.
\begin{enumerate}
\item If we relax the distortion constraint, a Gaussian input maximizes the capacity. The unconstrained capacity, denoted by $C_{\max}$, is $\frac{1}{2} \EE[\log (1+|S|^2 P)]=1.213$ [bit/channel use] by averaging over all possible fading states. The corresponding expected distortion is $\EE[\frac{1}{1+|X|^2}]$ where the expectation is with respect to the Gaussian distributed $X$. 
\item The minimum distortion $D_{\min}$ is achieved by $2$-ary pulse amplitude modulation (PAM) and is equal to $\frac{1}{1+P}=0.091$. The corresponding capacity of $2$-PAM is $0.733$ [bit/channel use].
\end{enumerate}
Fig. \ref{fig:Gaussian} shows the capacity-distortion tradeoff calculated by applying the modified Blahut-Arimoto algorithm to the quantized real AWGN channel and $M$-ary PAM. The separation-based approach achieves two corner points, namely $(D_{\min}, 0)$ by dedicating
full resources to state estimation and $(D_{\max}, C_{\max})$ with $D_{\max} \eqdef \var[S]=1$, by ignoring feeback and sending data with Gaussian distribution. 

\bibliographystyle{IEEEbib}

\end{document}